\newtheorem{theorem}{Theorem}
\newtheorem{proposition}{Proposition}
\newtheorem{lemma}{Lemma}
\newtheorem{definition}{Definition}
\newtheorem{conjecture}{Conjecture}
\newcommand{\F}{\mathbb{F}}
\newcommand{\N}{\mathbb{N}}
\newcommand{\FF}{\mathcal{F}}
\newcommand{\CC}{\mathcal{C}}
\newcommand{\cl}{\mathrm{cl}}
\begin{document}
\title{Uniform Minors in Maximally Recoverable Codes}

\author{%
  \IEEEauthorblockN{Matthias Grezet, Thomas Westerb\"{a}ck, Ragnar Freij-Hollanti, and Camilla Hollanti}
  \thanks{M. Grezet, R. Freij-Hollanti, and C. Hollanti are with the Department of Mathematics and Systems Analysis, Aalto University, FI-00076 Aalto, Finland (email: $\{$firstname.lastname, ragnar.freij$\}$ @aalto.fi).}
  \thanks{T. Westerb\"{a}ck is with the Division of Applied Mathematics, UKK, M\"{a}lardalen University, H\"{o}gskoleplan 1, Box 883, 721 23 V\"{a}ster\r{a}s, Sweden (e-mail: thomas.westerback@mdh.se).}
  \thanks{This work was supported in part by the Academy of Finland, under grants 276031, 282938, and 303819, and by the Technical University of Munich -- Institute for Advanced Study, funded by the German Excellence Initiative and the EU 7th Framework Programme under grant agreement 291763, via a Hans Fischer Fellowship.}
   \thanks{Copyright \textcopyright \, 2019 IEEE. Personal use of this material is permitted.  However, permission to use this material for any other purposes must be obtained from the IEEE by sending a request to pubs-permissions@ieee.org.}
  }

\maketitle

\begin{abstract}
In this letter, locally recoverable codes with maximal recoverability are studied with a focus on identifying the MDS codes resulting from puncturing and shortening. By using matroid theory and the relation between MDS codes and uniform minors, the list of all the possible uniform minors is derived. This list is used to improve the known non-asymptotic lower bound on the required field size of a maximally recoverable code. 
\end{abstract}

%
\IEEEpeerreviewmaketitle





\section{Introduction}

With the exponential growth of data needed to be stored remotely, distributed storage systems (DSSs) using erasure-correcting codes have become attractive due to their high reliability and low storage overhead. A class of codes called \emph{locally recoverable codes} (LRCs) has been introduced in \cite{gopalan12, papailiopoulos12} as an alternative to traditional \emph{maximum distance separable} (MDS) codes to improve node repair efficiency by allowing one failed node to be repaired by only accessing a few other nodes. 

A linear $(n,k,r)$-LRC is a linear code of length $n$ and dimension $k$ over $\F_q$ such that every codeword symbol $i \in [n]=\{1, \ldots , n\}$ is contained in a \emph{repair set} $R_{i} \subseteq [n]$ with $|R_{i}|\leq r+1$ and the minimum Hamming distance of the restriction of the code to $R_{i}$ is at least $2$. In other words, any symbol can be determined by the values of at most $r$ other symbols. Constructions of distance-optimal LRCs with field size of order $n$ have been given in \cite{tamo14}. 

LRCs with \emph{maximal recoverability} (MR-LRCs) or maximally recoverable codes (also known as partial MDS or PMDS) have been introduced in \cite{chen07}. MR-LRCs are a subclass of distance-optimal LRCs that can correct any erasure pattern that is information-theoretically correctable. Formally, an $(n,k,r)$ MR-LRC is an $(n,k,r)$-LRC whose codeword symbols are partitioned into $g:=n/(r+1)$ disjoint repair sets $R_{i}$ and any set $S$ of $k$ symbols with $R_{i} \nsubseteq S$ is an information set. The number of heavy (global) parity checks is $h:=n-k-g=gr-k$. This definition can be extended to allow the repair sets to correct $\delta-1$ erasures but for the clarity of this letter, we will only consider $\delta=2$. 

MR-LRCs drew a lot of attention recently with many papers being devoted to the construction of general classes of MR-LRCs over the lowest possible field size. While a field size linear in $n$ is sufficient for optimal LRCs, known constructions of MR-LRCs for any parameters $(n,k,r,\delta)$ are generally exponential in $r$ or $h$. A general construction for $\delta=2$ local erasures with field size of order $k^{h}$ was obtained in \cite{gopalan14}. The best constructions so far for MR-LRCs tolerating $\delta-1$ local erasures were given in \cite{gabrys18,martinez18}, where \cite{gabrys18} obtained field sizes of order $ (r+\delta-1)n^{(\delta+1)h-1}$ and $\max\{ g, (r+\delta-1)^{\delta+h} \}^{h}$, and \cite{martinez18} obtained a field size of order $(g+1)^{r}$. 

However, little is known regarding the lower bound on the required field size $q$. In \cite{gopalan14}, the authors proved that by puncturing one element per repair set, the resulting code is an $[n-g,k,n-g-k+1]$ MDS code and therefore $q \geq k+1$. Recently, \cite{gopi19} gave the first asymptotic superlinear lower bound for MR-LRCs tolerating $\delta-1$ erasures when $h$ is constant and $r$ may grow with $n$. The bound is the following:
\[
q \geq \Omega(n r^{\alpha}) \; \text{where} \; \alpha = \frac{\min \{ \delta-1, h-2\lceil h/g \rceil \} }{\lceil h/g \rceil }.
\]

In this letter, we pursue the approach started by \cite{gopalan14} and identify, for each dimension, the largest length of an MDS code obtained by puncturing and shortening. Our main tools to achieve this come from matroid theory. The link between MR-LRCs and matroids was already used in \cite{lalitha15} where the authors computed the Tutte polynomial of MR-LRCs to derive the weight enumerator and higher support weights. Here, we work with the collection of flats and matroid minors to construct the largest possible uniform minors in MR-LRCs and thus, the largest MDS codes. These minors are then used to improve the non-asymptotic lower bound found in \cite{gopalan14}, both with and without assuming the MDS conjecture.



\section{Preliminaries}
\label{section:preliminaries}

We denote the set $\{1,2,\ldots, n\}$ by $[n]$ and the set of all subsets of $[n]$ by $2^{[n]}$. A generator matrix of a linear code $\CC$ is $G_{\CC} = (\mathbf{g_{1}} \cdots \mathbf{g_{n}})$ where $\mathbf{g_{i}} \in \F_{q}^{k}$ is a column vector for $i \in [n]$. Matroids have many equivalent definitions in the literature. Here, we choose to define matroids via their rank functions. Much of the contents in this section can be found in more detail in \cite{freij18}.

\begin{definition}
\label{def:matroid_rank}
A \emph{(finite) matroid} $M=(E, \rho)$ is a finite set $E$ together with a \emph{rank function} $\rho:2^E \rightarrow \mathbb{Z}$ such that for all subsets $X,Y \subseteq E$,
\[
\begin{array}{rl} 
(R.1) & 0 \leq \rho(X) \leq |X|,\\
(R.2) & X \subseteq Y \quad \Rightarrow \quad \rho(X) \leq \rho(Y),\\
(R.3) & \rho(X) + \rho(Y) \geq \rho(X \cup Y) + \rho(X \cap Y). 
\end{array}
\]
\end{definition}

There is a unique matroid $M_{\CC}$ associated to a linear code $\CC$ where $E=[n]$ and $\rho(X)$ is the dimension of the restriction of $\CC$ to $X$ for $X\subseteq [n]$.

Two matroids $M_1 = (E_1, \rho_1)$ and $M_2 = (E_2, \rho_2)$ are \emph{isomorphic} if there exists a bijection $\psi: E_1 \rightarrow E_2$ such that $\rho_2(\psi(X)) = \rho_1(X)$ for all subsets $X \subseteq E_1$. We denote two isomorphic matroids by $M_{1} \cong M_{2}$. 

Let $M = (E,\rho)$ be a matroid. The \emph{closure} operator $\cl:~2^E \rightarrow 2^E$ is defined by $ \cl(X) = \{e \in E  : \rho(X \cup e) =\rho(X)\}$. A subset $F \subseteq E$ is a \emph{flat} if $\cl(F) = F$ and the collection of flats is denoted by $\FF(M)$. 


\begin{definition}
The \emph{uniform matroid} $U_n^k=([n],\rho)$ is a matroid with a ground set $[n]$ and a rank function $\rho(X)=~\min\{|X|, k\}$ for $X\subseteq [n]$. In particular, the flats are $\FF(M)=\{ F \subseteq [n] : |F|<k \} \cup [n]$. 
\end{definition}

The following straightforward observation gives a characterization of MDS codes. 

\begin{proposition}
\label{prop:mds_unif}
A linear code $\CC$ is an $[n,k]$-MDS code of length $n$ and dimension $k$ if and only if $M_{\CC}$ is the uniform matroid $U_n^k$.
\end{proposition}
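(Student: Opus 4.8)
The plan is to translate both sides of the equivalence into statements about the ranks of sets of columns of a generator matrix $G_\CC = (\mathbf{g_1} \cdots \mathbf{g_n})$ and to check that they match. By the definition of $M_\CC$, the value $\rho(X)$ is the rank of $\{\mathbf{g_i} : i \in X\}$, so in particular $\rho([n]) = \dim \CC$. I would use the standard characterization of an $[n,k]$ linear code being MDS: its minimum distance attains the Singleton bound $d = n-k+1$, equivalently every $k$ columns of $G_\CC$ are linearly independent (every $k \times k$ submatrix of $G_\CC$ is nonsingular). I would also use axiom (R.3) together with (R.1) in the weakened form $\rho(X \cup Y) \le \rho(X) + |Y|$.

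For the direction $M_\CC \cong U_n^k \Rightarrow \CC$ MDS: since $n \ge k$, we get $\dim \CC = \rho([n]) = \min\{n,k\} = k$, and for every $k$-subset $X$ we have $\rho(X) = \min\{k,k\} = k$, so the corresponding $k$ columns of $G_\CC$ are independent; hence $\CC$ is MDS. Equivalently, the largest set of rank $< k$ has size $k-1$, and since the minimum distance of a linear code equals $n - \max\{|X| : \rho(X) < k\}$, one reads off $d = n-k+1$.

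For the converse, assume $\CC$ is $[n,k]$-MDS, so $\rho([n]) = k$ and every $k$ columns of $G_\CC$ are independent. Let $X \subseteq [n]$. If $|X| \ge k$, choose a $k$-subset $X' \subseteq X$; then $k = \rho(X') \le \rho(X) \le \rho([n]) = k$, so $\rho(X) = k = \min\{|X|,k\}$. If $|X| < k$, extend $X$ to a $k$-subset $X' \supseteq X$; the columns indexed by $X'$ are independent, hence so are those indexed by $X$, giving $\rho(X) = |X| = \min\{|X|,k\}$. Thus $\rho(X) = \min\{|X|,k\}$ for all $X \subseteq [n]$, i.e.\ $M_\CC = U_n^k$.

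The argument is essentially a bookkeeping exercise once the right dictionary is fixed, so I do not anticipate a genuine obstacle; the only point deserving a careful statement is the initial equivalence between ``MDS'' in the distance sense and the column-independence property, after which both implications reduce to routine restriction and extension of a basis together with the monotonicity and submodularity of $\rho$.
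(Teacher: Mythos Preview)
Your argument is correct and is exactly the standard dictionary translation one would expect; the paper in fact does not give a proof at all, merely labeling the proposition a ``straightforward observation.'' Your write-up fills in precisely the routine verification the paper leaves implicit, so there is nothing to compare beyond noting that you have supplied what the authors omitted.
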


There are several elementary operations that are useful for explicit constructions of matroids, as well as for analyzing their structure. 

\begin{definition}
Let $M = (E,\rho)$ be a matroid and $X, Y \subseteq E$. Then
\begin{enumerate}
\item The \emph{restriction} of $M$ to $Y$ is the matroid $M|Y = (Y, \rho_{|Y})$, where $\rho_{|Y}(A) = \rho(A)$ for $A \subseteq Y$.
\item The \emph{contraction} of $M$ by $X$ is the matroid $M/X = (E-X, \rho_{/X})$, where $\rho_{/X}(A) = \rho(A \cup X) - \rho(X)$ for $A \subseteq E-X$.
\item For $X \subseteq Y$, a \emph{minor} of $M$ is the matroid $M|Y/X = (Y-X, \rho_{|Y/X})$ obtained from $M$ by restriction to $Y$ and contraction by $X$. Observe that this does not depend on the order in which the restriction and contraction are performed.
\end{enumerate}
\end{definition}

The deletion of $M$ by $Y$, denoted by $M \setminus Y$, is the restriction of $M$ to $E-Y$. These operations can be equivalently defined via the generator matrix $G_{\CC}$ of a code $\CC$ of length $n$. If we label the columns of $G_{\CC}$ from $1$ to $n$, then the restriction to $Y \subseteq [n]$ is the same as considering the submatrix formed by the columns indexed in $Y$ and the contraction by $X\subseteq [n]$ is the projection from the columns indexed in $X$. Thus, the deletion and contraction correspond to puncturing and shortening of codes, respectively. We can also describe the flats of a minor.

\begin{proposition}
\label{prop:flats_minor}
Let $M = (E,\rho)$ be a matroid and $X,F \subseteq E$ with $F \in \FF(M)$, then
\begin{enumerate}
\item $\FF(M/F)=\{A \subseteq E-F : A \cup F \in \FF(M) \}$,
\item $\FF(M\setminus X) = \{ F - X : F \in \FF(M) \}$.
\end{enumerate}
\end{proposition}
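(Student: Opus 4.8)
The plan is to reduce both statements to a single basic fact: the closure operator of a minor of $M$ can be expressed in terms of the closure operator of $M$ itself, after which each claim becomes a short exercise in set manipulation. I expect no serious obstacle; the only points requiring care are keeping track of the ambient ground set in which each closure is computed, and, in part (2), verifying that the candidate set is genuinely a flat of the deletion rather than merely a subset of one.

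For part (1), I would first determine $\cl_{M/F}$. Unwinding $\rho_{/F}(A) = \rho(A \cup F) - \rho(F)$, for $A \subseteq E - F$ and $e \in E - F$ we have $e \in \cl_{M/F}(A)$ iff $\rho(A \cup F \cup e) = \rho(A \cup F)$, i.e.\ iff $e \in \cl_M(A \cup F)$; since $F \subseteq A \cup F \subseteq \cl_M(A \cup F)$ this yields $\cl_{M/F}(A) = \cl_M(A \cup F) - F$. Consequently $A \in \FF(M/F)$ iff $\cl_M(A \cup F) - F = A$, and because $A$ and $F$ are disjoint while $F \subseteq \cl_M(A \cup F)$, this is equivalent to $\cl_M(A \cup F) = A \cup F$, that is, to $A \cup F \in \FF(M)$. (The hypothesis $F \in \FF(M)$ is not even needed for this identity, but it does no harm.)

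For part (2), the same computation for the restriction $M \setminus X = M|(E - X)$ is even simpler: since $\rho_{|E-X}$ is just $\rho$ restricted to subsets of $E - X$, we get $\cl_{M \setminus X}(A) = \cl_M(A) \cap (E - X) = \cl_M(A) - X$ for $A \subseteq E - X$, so $A \in \FF(M \setminus X)$ iff $\cl_M(A) - X = A$. For the inclusion $\supseteq$, take any $F \in \FF(M)$; monotonicity of $\cl_M$ gives $\cl_M(F - X) \subseteq \cl_M(F) = F$, hence $\cl_M(F - X) - X \subseteq F - X$, while $F - X \subseteq \cl_M(F - X)$ trivially gives the reverse inclusion, so $F - X \in \FF(M \setminus X)$. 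For the inclusion $\subseteq$, if $A \in \FF(M \setminus X)$ then $F := \cl_M(A) \in \FF(M)$ satisfies $F - X = \cl_M(A) - X = A$, exhibiting $A$ in the required form. This completes the plan.
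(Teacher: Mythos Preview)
Your argument is correct. The paper states this proposition without proof, treating it as a standard preliminary fact about matroid minors (with reference to \cite{freij18} for background), so there is no proof in the paper to compare against; your derivation via the explicit formulas $\cl_{M/F}(A)=\cl_M(A\cup F)-F$ and $\cl_{M\setminus X}(A)=\cl_M(A)-X$ is the natural one and is complete as written.
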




\section{Uniform minors and a lower bound on $q$}
\label{section:section3}

As mentioned in the introduction, this letter pursues two objectives: classifying the uniform minors or MDS codes inside an MR-LRC and improving the lower bound on the required field size. The second problem is highly related to the MDS conjecture.

\begin{conjecture}[\hspace{1sp}\cite{segre55}]
If $k \leq q$ then a linear $[n,k]$-MDS code over $\F_{q}$ has length $n \leq q+1$ unless $q=2^{m}$ and $k=3$ or $k=q-1$, in which case $n \leq q+2$. 
\end{conjecture}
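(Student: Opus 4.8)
\noindent\emph{Proof proposal.} This statement is the celebrated MDS (``main'') conjecture of Segre \cite{segre55}, which is still open in general; what follows is a sketch of the strategy behind the known partial results rather than a complete argument.

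First I would pass to projective geometry via Proposition~\ref{prop:mds_unif}: an $[n,k]$-MDS code over $\F_q$ with $k \le q$ is exactly an \emph{$n$-arc} in $\P^{k-1}(\F_q)$, i.e.\ a set of $n$ points no $k$ of which lie on a hyperplane, and the claim becomes that arcs have size at most $q+1$. The extremal configuration is the normal rational curve $\{(1:t:\cdots:t^{k-1}) : t \in \F_q\} \cup \{(0:\cdots:0:1)\}$, and in the two listed cases ($k=3$, and dually $k=q-1$, in even characteristic) a conic together with its nucleus yields a $(q+2)$-arc. Since the extreme ranges of $k$ are classical (in particular $k=3$ is settled already in \cite{segre55} for $q$ odd) and an arc may be replaced by its dual, one reduces to the case $3 < k < q-1$.

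Second, for that range the core tool is Segre's \emph{lemma of tangents}: for a sufficiently large planar arc one obtains a multiplicative identity among the tangent lines at its points, and this confines the arc to an algebraic curve of degree small compared with $n$; counting the $\F_q$-points of that curve (Hasse--Weil in the smooth case, together with a componentwise argument) then gives $n \le q+1$ for $q$ odd and $n \le q+2$ for $q$ even, with the extra point identified in the latter case. The general $k$ is then reached either by projecting onto planes and applying the planar result inductively, or by the more recent polynomial method, which in particular settles the conjecture when $q$ is prime and, more generally, when $k$ is small relative to the characteristic of $\F_q$.

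The main obstacle is exactly why this is a conjecture and not a theorem: for a proper prime power $q$ and $k$ lying well inside $3 < k < q-1$, neither the lemma of tangents nor the polynomial method forces the arc onto a curve of low enough degree for the counting to close, and no substitute is known. I would therefore not expect to settle it here; in the next section the conjecture is invoked only as an optional hypothesis, under which the lower bound on the field size $q$ can be strengthened.
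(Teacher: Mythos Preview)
Your proposal is correct in spirit: the paper does not prove this statement at all, since it is the (open) MDS conjecture, and you rightly identify it as such. The paper merely cites it, remarks that it is known when $q$ is prime or when $k\le 2p-2$ for $q=p^m$ \cite{ball12}, and later uses it only as an optional hypothesis alongside the unconditional Lemma~\ref{lemma:mds_lowerbound}; your sketch of the arc/tangent-lemma/polynomial-method background is accurate and consistent with those remarks, but goes well beyond anything the paper itself attempts.
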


The conjecture is proven when $q$ is a prime or when $k\leq 2p-2$ for $q=p^m$ in \cite{ball12}. Without assuming the MDS conjecture, the following lemma bounds the field size.

\begin{lemma}[\hspace{1sp}\cite{ball12} Lemma 1.2]
\label{lemma:mds_lowerbound}
Any $[n,k]$-MDS code over $\F_{q}$ satisfies $n \leq q+k-1$. 
\end{lemma}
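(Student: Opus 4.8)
The statement to prove is Lemma~\ref{lemma:mds_lowerbound}: any $[n,k]$-MDS code over $\F_q$ satisfies $n \le q+k-1$. The plan is to reduce to the case of small dimension by exploiting duality and a shortening argument, and then close the small-dimension case by a direct counting argument.

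First I would recall the two standard reductions available for MDS codes. The dual of an $[n,k]$-MDS code is an $[n,n-k]$-MDS code, so without loss of generality we may assume $2k \le n$ (otherwise replace $\CC$ by $\CC^\perp$, and note that $n \le q + (n-k) - 1$ is the same inequality as $n \le q+k-1$ after rearranging). Next, shortening an MDS code at one coordinate yields an $[n-1,k-1]$-MDS code: in matroid language this is contraction by a single non-loop element, and by Proposition~\ref{prop:mds_unif} together with the fact that $U_n^k / \{e\} \cong U_{n-1}^{k-1}$, the shortened code is again MDS. Iterating, from an $[n,k]$-MDS code we obtain $[n-k+2, 2]$-MDS codes. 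So it suffices to prove the bound $n \le q+1$ for $k=2$, since then the original code satisfies $n - (k-2) \le q+1$, i.e. $n \le q + k - 1$.

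The remaining step is the case $k=2$: an $[n,2]$-MDS code over $\F_q$ has $n \le q+1$. Here I would argue via the generator matrix $G = (\mathbf{g_1} \cdots \mathbf{g_n})$ with $\mathbf{g_i} \in \F_q^2$. The MDS / uniform-matroid condition says every pair of columns is linearly independent, i.e. no two columns are scalar multiples of one another. Hence the columns represent $n$ distinct points of the projective line $\P^1(\F_q)$, which has exactly $q+1$ points; therefore $n \le q+1$. This is clean and self-contained, modulo the observation that every column of an MDS generator matrix is nonzero (a zero column would be a loop, contradicting $\rho(\{i\}) = 1$ forced by $U_n^k$ with $k \ge 1$, $n \ge 2$).

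The only genuine subtlety — the ``hard part'', though it is mild — is making sure the reductions are valid in the degenerate ranges: one must check that shortening stays within the MDS class all the way down (which the matroid identity $U_n^k/e \cong U_{n-1}^{k-1}$ handles), and that the duality step does not lose the inequality. It is also worth noting the trivial cases $k \le 1$ or $n \le k$ separately, where the bound $n \le q+k-1$ holds vacuously or immediately since $n = k$ and $q \ge 2$. Assembling these pieces gives the lemma; the argument is short and requires no field-size assumptions beyond $q$ being a prime power.
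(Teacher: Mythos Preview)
The paper does not supply its own proof of this lemma; it is quoted from \cite{ball12} as a known result. Your core argument is correct and is the standard elementary proof: shorten $k-2$ times to reach an $[n-k+2,2]$-MDS code, observe that the columns of a generator matrix give $n-k+2$ distinct points of $\P^1(\F_q)$, and conclude $n-k+2\le q+1$, i.e.\ $n\le q+k-1$.

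Two inaccuracies are worth flagging. First, the duality step is both unnecessary and incorrect as stated: the bound applied to $\CC^\perp$ reads $n\le q+(n-k)-1$, which rearranges to $k+1\le q$ and is \emph{not} the same inequality as $n\le q+k-1$. Fortunately you never actually use it, since shortening alone reduces to $k=2$ regardless of whether $2k\le n$; simply drop the duality paragraph. Second, your dismissal of $k\le 1$ as trivial is wrong: for $k=1$ the repetition code is an $[n,1,n]$-MDS code over any $\F_q$ for every $n$, so the inequality $n\le q$ can fail. The lemma should be read with the implicit hypothesis $k\ge 2$, which is also how the paper applies it (always with $k'\ge 2$).
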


Regarding the classification of the uniform minors, we first give the structure of the flats of the associated matroid to an MR-LRC. For simplicity, if $M$ is the matroid associated to an $(n,k,r)$ MR-LRC, then $M$ is called an \emph{$(n,k,r)$-MR matroid}. 

\begin{proposition}
Let $M=(E,\rho)$ be an $(n,k,r)$-MR matroid. Then the flats are
\begin{align*}
\FF(M) = \{ F & \subseteq E : \text{for all } i \in [g], R_{i} \subseteq F \text{  or } |R_{i} \cap F| \leq \\ 
	& r-1, |F|-|\{ i : R_{i} \subseteq F \}|<k \} \cup E.
\end{align*}
The rank of a flat $F \neq E$ is $\rho(F)=|F|-|\{ i : R_{i} \subseteq F \}|$. 
\end{proposition}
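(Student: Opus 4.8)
\emph{Proof strategy.} The plan is to first determine the rank function of an $(n,k,r)$-MR matroid $M=(E,\rho)$ completely, and then read the flats off from it. Throughout, write $\nu(X):=|\{i\in[g] : R_i\subseteq X\}|$ and recall the defining features of such a matroid: the ground set is partitioned into repair sets $R_1,\dots,R_g$ with $|R_i|=r+1$ and $n=g(r+1)$; the local restriction $M|R_i$ is isomorphic to the uniform matroid $U_{r+1}^{r}$ (this is the $\delta=2$ locality condition), so $\rho(R_i)=r$ and every proper subset of $R_i$ is independent; $\rho(E)=k$ and $h=gr-k\geq 0$; and every $k$-subset $S\subseteq E$ with $R_i\not\subseteq S$ for all $i$ is independent (maximal recoverability).

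\emph{Step 1: the rank function.} I claim that $\rho(X)=\min\{k,\ |X|-\nu(X)\}$ for every $X\subseteq E$. I would first reduce to sets containing no full repair set: choosing one element $e_i\in R_i$ for each repair set $R_i\subseteq X$ and setting $Y:=X\setminus\{e_i : R_i\subseteq X\}$, one has $|Y|=|X|-\nu(X)$ and $Y$ contains no full repair set; moreover, for each such $i$, $R_i\setminus e_i\subseteq Y$ and $\rho(R_i\setminus e_i)=r=\rho(R_i)$ by locality, so $e_i\in\cl(R_i\setminus e_i)\subseteq\cl(Y)$, whence $X\subseteq\cl(Y)$ and $\rho(X)=\rho(Y)$. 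The upper bound $\rho(X)\leq\min\{k,|Y|\}$ is then immediate from $\rho(Y)\leq|Y|$ and $\rho(Y)\leq\rho(E)=k$. For the lower bound I would distinguish two cases. If $|Y|\leq k$, enlarge $Y$ to a $k$-set $S\supseteq Y$ containing no repair set by adjoining $k-|Y|$ further elements while avoiding one designated element $f_i\in R_i\setminus Y$ for every $i$; this is possible because the number of available elements is $n-|Y|-g=gr-|Y|\geq k-|Y|$, where $h\geq 0$ is used. Maximal recoverability then forces $S$, hence $Y$, to be independent, so $\rho(Y)=|Y|$. If $|Y|>k$, any $k$-subset of $Y$ contains no repair set and so is independent, giving $\rho(Y)\geq k$ and hence $\rho(Y)=k$. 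In both cases $\rho(Y)=\min\{k,|Y|\}$, which proves the claim.

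\emph{Step 2: the flats.} Now fix $F\subseteq E$ and $e\in E\setminus F$, and let $i_0$ be the index with $e\in R_{i_0}$. Adding $e$ either completes $R_{i_0}$ (exactly when $|R_{i_0}\cap F|=r$), in which case $|F\cup e|-\nu(F\cup e)=|F|-\nu(F)$, or it does not, in which case $|F\cup e|-\nu(F\cup e)=|F|-\nu(F)+1$. Combining this with the rank formula of Step 1 shows that $\rho(F\cup e)>\rho(F)$ if and only if $\rho(F)<k$ and $e$ does not complete a repair set. Consequently $E$ is a flat, and a set $F\neq E$ is a flat if and only if $\rho(F)<k$ --- equivalently $|F|-\nu(F)<k$ by Step 1 --- and no $e\notin F$ completes a repair set, i.e.\ for every $i$ one has $|R_i\cap F|\leq r-1$ or $R_i\subseteq F$. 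This is precisely the claimed description of $\FF(M)$; the element $E$ is genuinely extra, since $|E|-\nu(E)=gr\geq k$ excludes $E$ from the set-builder part. Finally, for a flat $F\neq E$ we have just seen that $\rho(F)<k$, so the rank formula yields $\rho(F)=|F|-\nu(F)=|F|-|\{i:R_i\subseteq F\}|$.

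The crux is Step 1, and within it the lower bound: one has to justify the reduction to a repair-set-free set $Y$ by means of the local uniform structure, and then carry out the counting that enlarges $Y$ to a $k$-set still missing every repair set so that maximal recoverability can be invoked. Once the rank function is in hand, Step 2 is a routine verification of the closure condition.
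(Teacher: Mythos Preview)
Your proof is correct and follows the same route as the paper: establish the rank formula $\rho(X)=\min\{k,|X|-\nu(X)\}$ and read off the flats from it. The paper merely asserts the rank function and sketches one implication of the flat characterization, whereas you supply the full argument, including the reduction to a repair-set--free subset via the local $U_{r+1}^r$ structure and the counting step (using $h\ge 0$) that extends such a subset to a $k$-set avoiding every repair set so that maximal recoverability applies.
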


\begin{proof}
A set $A$ with $\rho(A)<k$ is not a flat if and only if there exists a repair set $R_{i}$ such that $|R_{i} \cap A|=r$. Indeed, if $\{e\}=R_{i}-A$ then $e\in \cl(R_{i}-A)=R_{i}$ and therefore $e\in \cl(A)-A$. Moreover, the rank function of $M$ is given by $\rho(A)=\min\{ k,|A|-\{i : R_{i} \subseteq A \}| \}$. 
\end{proof}

The following theorem by Higgs is known as the \emph{Scum Theorem}. It significantly restricts the sets $A\subseteq B\subseteq E$ that one must consider in order to find all minors of $M$ as $M|B/A$. 

\begin{theorem}[\hspace{1sp}\cite{crapo70} Proposition 3.3.7] 
\label{thm:scum}
Let $N=(E_N,\rho_N)$ be a minor of a matroid $M=(E,\rho)$. Then there is a pair of sets $A \subseteq B\subseteq E$ with $\rho(A) = \rho(E)-\rho_N(E_N)$ and $\rho(B)=\rho(E)$, such that $M|B/A \cong N$. Further, if for all $e \in E_N$, we have $\rho_N(e)>0$, then $A$ can be chosen to be a flat of $M$. 
\end{theorem}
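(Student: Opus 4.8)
The plan is to take an arbitrary presentation $N = M|Y/X$ of the minor (so $X \subseteq Y \subseteq E$ and $E_N = Y - X$) and then to repeatedly enlarge \emph{both} $Y$ and $X$ by one element at a time until $Y$ spans $M$; this ``floats'' the presentation to the top of the lattice of flats, which is the reason for the name \emph{Scum Theorem}. One extra closure step at the end then yields the refinement that $A$ can be taken to be a flat.

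The key technical step I would isolate first is a trading identity: \emph{if $e \in E \setminus Y$ and $\rho(Y \cup e) > \rho(Y)$, then $M|Y/X = M|(Y\cup e)/(X \cup e)$ as matroids on the common ground set $Y - X$.} To verify this, note that for $S \subseteq Y - X$ the two candidate values for the rank of $S$ are $\rho(S\cup X) - \rho(X)$ and $\rho(S \cup X \cup e) - \rho(X\cup e)$, so equality amounts to $\rho(S \cup X \cup e) - \rho(S \cup X) = \rho(X \cup e) - \rho(X)$; since $X \subseteq S \cup X \subseteq Y$ and $e \notin \cl(Y)$ (equivalently $\rho(Y\cup e)>\rho(Y)$), submodularity forces $e \notin \cl(S \cup X)$ and $e \notin \cl(X)$, so both sides equal $1$. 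Observe also that this replacement raises $\rho(Y)$ by exactly $1$ and, because $e \notin \cl(X)$, raises $\rho(X)$ by exactly $1$.

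Next I would iterate: as long as $\rho(Y) < \rho(E)$ there is some $e \in E\setminus Y$ with $\rho(Y\cup e) > \rho(Y)$, so after finitely many applications of the trading identity we reach a presentation $N = M|B/A$ with $A \subseteq B \subseteq E$ and $\rho(B) = \rho(E)$. For this presentation $\rho_N(E_N) = \rho_{M|B/A}(B-A) = \rho(B) - \rho(A) = \rho(E) - \rho(A)$, hence $\rho(A) = \rho(E) - \rho_N(E_N)$, which is the first assertion. For the refinement, assume $\rho_N(e) > 0$ for every $e \in E_N$, i.e.\ $N$ has no loops, and replace $A$ by $A' := \cl(A)$ and $B$ by $B' := B \cup A'$. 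No element of $E_N = B - A$ lies in $\cl(A)$ (such an element would be a loop of $N$), so $B' - A' = B - A = E_N$ and the ground set is unchanged; moreover, for $S \subseteq E_N$ one has $\rho(S \cup A') = \rho(S\cup A)$ and $\rho(A') = \rho(A)$ (each time the larger set lies inside the closure of the smaller), whence $\rho_{M|B'/A'}(S) = \rho(S\cup A) - \rho(A) = \rho_N(S)$; finally $\rho(A') = \rho(A) = \rho(E) - \rho_N(E_N)$ and $\rho(B') = \rho(B) = \rho(E)$ since $A \subseteq B$ gives $B \subseteq B' \subseteq \cl(B)$, and $A'$ is a flat of $M$ by construction.

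The only places where there is genuine content are the trading identity and, for the refinement, the claim that passing from $A$ to $\cl(A)$ does not shrink the ground set; this last point is exactly what fails without the no-loop hypothesis and is the reason that hypothesis appears in the statement. Everything else is rank bookkeeping via the rank axioms together with the elementary identities $\rho(\cl(T)) = \rho(T)$ and $\rho(S \cup \cl(T)) = \rho(S \cup T)$.
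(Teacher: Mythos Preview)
The paper does not prove this theorem; it is quoted as a known result (attributed to Higgs, with the reference to \cite{crapo70}) and then used as a tool in the subsequent propositions. Your argument is correct and is essentially the standard proof of the Scum Theorem: the trading identity is sound (the submodularity step showing $e\notin\cl(S\cup X)$ is the one nontrivial point, and you handle it properly), the iteration terminates with $\rho(B)=\rho(E)$ and the right rank for $A$, and the closure step for the ``no loops'' refinement is justified exactly as you say. There is nothing to compare against in the paper itself.
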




The next four propositions classify all the uniform minors in an MR-LRC. One uniform minor has already been obtained in \cite{gopalan14} by deleting one element per repair set. It can be formulated as follows.

\begin{proposition}[\hspace{1sp}\cite{gopalan14} Theorem 19]
\label{prop:uniform_k_nm}
Let $M$ be an $(n,k,r)$-MR matroid. Then $M$ contains a $U_{n'}^{k}$ minor where
\begin{equation}
\label{eq:uniform_k_nm}
n'=n-g.
\end{equation}
\end{proposition}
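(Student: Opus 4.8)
The plan is to exhibit an explicit pair $A \subseteq B \subseteq E$ such that $M|B/A \cong U_{n'}^{k}$ with $n' = n-g$, using the description of the flats and rank function of an MR matroid together with Proposition~\ref{prop:flats_minor}. Concretely, for each repair set $R_i$ pick one distinguished element $x_i \in R_i$, set $X = \{x_1, \dots, x_g\}$ and take $B = E$, $A = \emptyset$; equivalently, take the deletion $M \setminus X$. So the first step is to show that $M \setminus X$ is well-defined and compute its ground set, which has size $n - g = n'$, and verify its rank is still $k$ (deleting one element from each repair set cannot drop the rank below $k$ since the remaining $n-g$ elements still contain $g$ repair sets minus one element each, and by the MR property any $k$ of the surviving coordinates avoiding all $R_i$ form an information set; one should check such $k$-subsets exist, which follows from $n - g \ge k$, i.e. $gr \ge k$, true since $h = gr - k \ge 0$).

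The second and main step is to identify the flats of $M \setminus X$ using Proposition~\ref{prop:flats_minor}(2): $\FF(M \setminus X) = \{ F - X : F \in \FF(M)\}$. I would argue that the flats of $M \setminus X$ of rank less than $k$ are exactly all subsets of $E - X$ of size less than $k$, which by the definition of the uniform matroid (and Proposition~\ref{prop:mds_unif}) identifies $M \setminus X$ as $U_{n-g}^{k}$. For the inclusion "$\supseteq$": given $A \subseteq E - X$ with $|A| < k$, observe that $|R_i \cap A| \le |R_i| - 1 \le r$ for every $i$, and in fact since $x_i \notin A$ we have $|R_i \cap A| \le r$; more carefully, $R_i \subseteq A$ is impossible because $x_i \in R_i \setminus A$, so we need $|R_i \cap A| \le r - 1$ — this is the delicate point. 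Here I would use the Proposition on flats of an MR matroid: a set $A'$ with $\rho(A') < k$ fails to be a flat only if some $|R_i \cap A'| = r$. Taking $A' = A \cup \{x_i : R_i \cap A \ne \emptyset\}$ or more simply taking the closure $\cl_M(A)$ in $M$ and intersecting back — the claim is that $\cl_M(A) - X = A$ when $|A| < k$. Since no $R_i$ is contained in $A$ (as $x_i \notin A$) and $A$ has rank $|A| < k$, the closure of $A$ in $M$ adds only elements forced by full repair sets, of which there are none meeting $A$ in exactly $r$ elements once we note $|R_i \cap A| \le |R_i \setminus \{x_i\}| \le r$ with equality only if $R_i \setminus \{x_i\} \subseteq A$, in which case $x_i \in \cl_M(A)$ but $x_i \in X$ so it is removed again. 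Thus $\cl_{M \setminus X}(A) = \cl_M(A) - X = A$, giving the required flat structure.

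The reverse inclusion "$\subseteq$" — that every proper flat of $M \setminus X$ has size $< k$ — and the rank computation follow from the rank formula $\rho(F) = |F| - |\{i : R_i \subseteq F\}|$ together with the observation that a flat $F - X$ of $M \setminus X$ of rank $< k$ cannot contain any full repair set $R_i$ intersected with $E - X$ of size $> r-1$ unless $F$ itself behaved accordingly; combining with $\rho_{M \setminus X}(F - X) = \rho_M(F)$ (since restriction preserves rank) and that $F$ contains no full $R_i$ among those meeting $E-X$ nontrivially in the relevant way, one gets $\rho_{M\setminus X}(F-X) = |F - X|$ for all proper flats. Hence $M \setminus X \cong U_{n-g}^{k}$, and by the Scum Theorem (Theorem~\ref{thm:scum}) this realizes $U_{n-g}^{k}$ as a minor, completing the proof.

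\emph{Main obstacle.} The bookkeeping in the closure argument is the crux: one must carefully track, for each repair set $R_i$, how deleting the single element $x_i$ interacts with the "$|R_i \cap F| \le r-1$ or $R_i \subseteq F$" dichotomy in the flat description of MR matroids, ensuring that after deletion no phantom dependency survives to make small sets non-flats or to inflate ranks. Everything else is a direct substitution into Propositions~\ref{prop:mds_unif}, \ref{prop:flats_minor}, and the flat/rank description of MR matroids.
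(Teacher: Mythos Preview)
Your approach is correct and matches the paper's: the paper does not give its own proof but cites \cite{gopalan14} and describes the construction in one line as deleting one element from each repair set, which is precisely your choice of $X=\{x_1,\dots,x_g\}$ and $M\setminus X$. Your flat-based verification is sound but more elaborate than necessary---since no $R_i$ is contained in $E-X$, the rank formula gives $\rho(A)=\min\{k,|A|\}$ for every $A\subseteq E-X$, which already identifies $M\setminus X$ with $U_{n-g}^{k}$; the closing appeal to the Scum Theorem is also superfluous, as a deletion is already a minor.
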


\begin{proposition}
\label{prop:uniform_r_np}
Let $M$ be an $(n,k,r)$-MR matroid. Then $M$ contains a $U_{n'}^{r}$ minor where 
\begin{equation}
\label{eq:uniform_r_np}
n'=n-k+r-\left\lceil\frac{k}{r}\right\rceil +1.
\end{equation}
\end{proposition}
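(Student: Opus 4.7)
The plan is to apply the Scum Theorem (Theorem~\ref{thm:scum}) to reduce the problem to constructing a flat $A \subseteq E$ of rank $k - r$ together with a set $B \supseteq A$ of rank $k$ satisfying $M|B/A \cong U_{n'}^{r}$. I will build $A$ and $B$ out of whole and partial repair sets, so that $|B - A| = n'$ comes out by arithmetic.

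Write $k - r = rs + \ell$ with $0 \leq \ell \leq r - 1$, and take $A$ to be the union of $s$ complete repair sets $R_1, \ldots, R_s$ together with an arbitrary $\ell$-element subset of $R_{s+1}$ (empty when $\ell = 0$). The flat characterization in the preceding proposition, together with $\ell \leq r - 1$, shows that $A$ is a flat, and the rank formula gives $\rho(A) = rs + \ell = k - r$. The number of repair sets touched by $A$ is $s$ if $\ell = 0$ and $s + 1$ if $\ell > 0$; either way it equals $\lceil k/r \rceil - 1$, which is precisely what produces the $-\lceil k/r \rceil + 1$ summand in $n'$. Now let $T = E - A$ when $\ell = 0$ and $T = (E - A) - \{x\}$ for some fixed $x \in R_{s+1} - A$ when $\ell > 0$; set $B = A \cup T$. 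A direct count yields $|T| = n - k + r - \lceil k/r \rceil + 1 = n'$, and the identity $|B| - |\{i : R_i \subseteq B\}| = gr \geq k$ gives $\rho(B) = k$.

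It remains to verify $M|B/A \cong U_{n'}^{r}$, for which it suffices to show every $r$-subset $S \subseteq T$ is independent in $M/A$, i.e., $\rho(S \cup A) = k$; by the rank formula this is equivalent to saying $S$ completes no repair set $R_i$ not already contained in $A$. For any $R_i$ disjoint from $A$---that is, $i \geq s + 2$, together with $i = s + 1$ when $\ell = 0$---the inclusion $R_i \subseteq S \cup A$ would force $|S| \geq r + 1$, a contradiction. For $R_{s+1}$ in the remaining case $\ell > 0$, the excluded element $x$ lies in $R_{s+1} - A$ and outside $S$, so $R_{s+1} \not\subseteq S \cup A$. Independence of every $r$-subset of $T$ in $M/A$ follows, so the rank-$r$ minor $M|B/A$ has every $r$-subset as a basis, hence is the uniform matroid $U_{n'}^{r}$.

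The main conceptual obstacle is identifying the correct shape of $A$: one must see that rank is best ``paid for'' in chunks of $r$ by absorbing whole repair sets (each giving $r$ rank at the cost of $r + 1$ elements in $A$), with at most one partial repair set of size $\ell$ used to fine-tune to rank exactly $k - r$. Once this template is in place, the computation of $|T|$ and $\rho(B)$ and the independence check are all direct applications of the rank and flat formulae established in the preceding proposition.
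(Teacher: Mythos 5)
Your proof is correct and follows essentially the same approach as the paper: you build the contraction set out of $\lceil k/r\rceil - 1$ repair sets ($s$ full ones and, when $\ell>0$, one partial set of size $\ell$), then delete one extra element from the partially used repair set when $r\nmid k$. The only cosmetic difference is that the paper verifies uniformity by describing the flats of $M/F$ directly, whereas you verify it by checking the rank of $S\cup A$ for every $r$-subset $S$ via the rank formula; both are immediate consequences of the flat/rank description of an MR matroid.
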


\begin{proof}
By Theorem \ref{thm:scum}, we are looking for a flat $F \in \FF(M)$ such that $\rho(F)=k-r$ and for all $A \subseteq E-F$ with $|A|\leq r-1$ we have $F \cup A \in \FF(M)$. Then, by Proposition \ref{prop:flats_minor}, $M/F \cong U_{n-|F|}^{r}$. When $r \neq k$, we will also use an extra deletion.

The second condition implies that if there exists a non-empty $B_{i} \subsetneq R_{i}$ with $B_{i} \subseteq F$, then either $R_{i} \subseteq F$ or we need to delete an element from $M$. The reason is that if $|B_{i}|=r$, then $R_{i} \subseteq F$ since $F$ is a flat. If $|B_{i}|\leq r-1$, then let $e \in R_{i}-B_{i}$ and choose $A = R_{i}-B_{i}-\{e\}$. We have $|A|\leq r-1$ but $F \cup A$ is not a flat because $e \in \cl(F\cup A) - F\cup A$. 

Let us first assume that $r \mid k$. Because of the previous argument, we need $F=\bigcup_{i\in \left\lbrace 1, \ldots , \frac{k}{r}-1 \right\rbrace } R_{i}$. Then, for all $A \subseteq E-F$ with $|A|\leq r-1$, we have $F \cup A \in \FF(M)$. Thus, $M/F$ is a uniform minor with rank $r$ and size 
\[
n'=n-|F| = n-\left(\frac{k}{r}-1\right)(r+1)= n-k+r-\frac{k}{r}+1. 
\]

Assume now that $r \nmid k$. In this case, we need to add a part of a repair set to complete the rank and delete an element to remove the unwanted flat. Let $F_{1}=\bigcup_{i \in \left\lbrace 1, \ldots , \left\lfloor \frac{k}{r} \right\rfloor -1 \right\rbrace} R_{i}$. Then, we have
\[
k-2r< \rho(F_{1})=\left( \left\lfloor \frac{k}{r} \right\rfloor -1\right)r < k-r.
\]
Since the rank of the union of all repair sets is $k$, there exists an extra repair set $R_{\left\lfloor \frac{k}{r} \right\rfloor}$. Let $B \subseteq R_{\left\lfloor \frac{k}{r} \right\rfloor}$ such that $|B|=k-r-\rho(F_{1})$. Notice that $0 < |B| <r$. Now let $F=F_{1} \cup B$. Then, $\rho(F)=k-r$ and
\[
|F|=\left( \left\lfloor \frac{k}{r} \right\rfloor -1\right)(r+1) + |B| = k - r + \left\lceil \frac{k}{r} \right\rceil -2.
\]

Furthermore, for all $A \subseteq E-F-R_{\left\lfloor \frac{k}{r} \right\rfloor}$, we have $F \cup A \in \FF(M)$. It remains to delete one element in $R_{\left\lfloor \frac{k}{r} \right\rfloor}$ to get rid of the flat $R_{\left\lfloor \frac{k}{r} \right\rfloor}-B$. To this end, let $e \in R_{\left\lfloor \frac{k}{r} \right\rfloor} -B$. We have $\rho(M/F \setminus \{e\})=r$ and $\FF(M/F\setminus \{e\}) = \{ A \subseteq E -F - \{e\} : |A|\leq r-1 \} \cup (E-F-\{e\})$. Hence $M/F\setminus \{e\}$ is a uniform minor $U_{n'}^{r}$ where 
\[
n' = n-|F|-1 = n -k +r - \left\lceil \frac{k}{r} \right\rceil +1.
\] 
\end{proof}

\begin{proposition}
\label{prop:uniform_kp_np}
Let $M$ be an $(n,k,r)$-MR matroid and $2\leq k'\leq r-1$. Then $M$ contains a $U_{n'}^{k'}$ minor where 
\begin{equation}
\label{eq:uniform_kp_np}
n'=n-k+k'-\max\{j,0\} \; \text{with} \; j=\left\lfloor \frac{-h}{k'}\right\rfloor +g.
\end{equation}
\end{proposition}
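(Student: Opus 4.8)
The plan is to apply the Scum Theorem (Theorem~\ref{thm:scum}): since every element of $U_{n'}^{k'}$ has positive rank, it suffices to exhibit a flat $F\in\FF(M)$ with $\rho(F)=k-k'$ together with a (possibly empty) set $D\subseteq E-F$ for which $(M\setminus D)/F\cong U_{n'}^{k'}$ and $n'=n-|F|-|D|$. The rank of such a minor is automatically $\rho(E)-\rho(F)=k'$, and the requirement $\rho(E-D)=k$ of Theorem~\ref{thm:scum} holds whenever $D$ meets each repair set in at most one element, since then $\rho(E-D)=\min\{k,\,(n-|D|)-(g-|D|)\}=\min\{k,gr\}=k$.

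Next I would describe when such a contraction is uniform. Every flat of $M$ has the form $F=\bigcup_iB_i$ with $B_i=F\cap R_i$, where $B_i=R_i$ or $|B_i|\le r-1$, and $\rho(F)=\sum_i\min\{|B_i|,r\}$. Call $R_i$ \emph{heavy} for $F$ if $\min\{|B_i|,r\}\ge r-k'+1$, i.e.\ $B_i=R_i$ or $r-k'+1\le|B_i|\le r-1$. Using Proposition~\ref{prop:flats_minor}(1) and the flat description of an MR matroid, one checks that $M/F\cong U_{n-|F|}^{k'}$ exactly when every repair set satisfies $B_i=R_i$ or $|B_i|\le r-k'$: the key point is that a subset $A\subseteq E-F$ with $|A|<k'$ fails to lie in $\FF(M/F)$ precisely when some repair set with $B_i\ne R_i$ has $|R_i\cap(A\cup F)|=r$, which requires $|B_i|\ge r-k'+1$, while for $|A|\ge k'$ one gets $\rho(A\cup F)=k$, so that $A$ can be a flat of $M/F$ only if $A=E-F$. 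If some $R_i$ has $r-k'+1\le|B_i|\le r-1$, then deleting one element of the fragment $R_i-F$ (of size $\le k'$) destroys every offending flat coming from $R_i$, and one verifies it introduces no new flats and does not lower the rank; so taking for $D$ one such element per offending repair set yields $(M\setminus D)/F\cong U_{n'}^{k'}$. Since $|F|=(k-k')+\#\{i:B_i=R_i\}$ and $|D|=\#\{i:r-k'+1\le|B_i|\le r-1\}$, writing $d$ for the number of heavy repair sets one obtains $|F|+|D|=(k-k')+d$, hence $n'=n-(k-k')-d$.

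It therefore remains to construct a flat of rank $k-k'$ with at most $d^\ast:=\max\{j,0\}$ heavy repair sets, where $j=g-\lceil h/k'\rceil$; for then, after deleting further elements if necessary, $M$ contains a $U_{n'}^{k'}$ minor with $n'=n-(k-k')-d^\ast=n-k+k'-\max\{j,0\}$. To do this, designate $d^\ast$ of the repair sets as ``free'' (allowing $|B_i|$ up to $r$) and the other $g-d^\ast$ as ``capped'' ($|B_i|\le r-k'$), and choose the $B_i$ so that $\sum_i\min\{|B_i|,r\}=k-k'$: the capped sets can never be heavy, so at most $d^\ast$ repair sets become heavy, and such a choice exists because the attainable ranks fill the integer interval $[0,\,d^\ast r+(g-d^\ast)(r-k')]$. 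Finally, if $j\le0$ then $h>(g-1)k'$, hence $k-k'<g(r-k')$ and the choice $d^\ast=0$ works, giving $n'=n-k+k'$; and if $j>0$ then, substituting $h=gr-k$, the needed bound $k-k'\le d^\ast r+(g-d^\ast)(r-k')=gr-\lceil h/k'\rceil k'$ is equivalent to $\lceil h/k'\rceil\le h/k'+1$, which always holds.

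The step I expect to be the main obstacle is the structural claim in the second paragraph: one must verify carefully, via Proposition~\ref{prop:flats_minor} and the flat description of $M$, that after the deletions the flats of $(M\setminus D)/F$ are exactly the subsets of $E-F-D$ of size $<k'$ together with $E-F-D$ itself, and that its rank remains $k'$, so that the minor really is $U_{n'}^{k'}$ (equivalently, by Proposition~\ref{prop:mds_unif}, the associated code obtained by puncturing and shortening is MDS); in particular, that no interaction between distinct fragments creates an unexpected small flat and that a single deletion per offending repair set genuinely suffices. The remaining arithmetic is light --- the identity $|F|+|D|=(k-k')+d$ and the inequality $\lceil h/k'\rceil\le h/k'+1$ --- with a handful of degenerate configurations (e.g.\ $k-k'$ very small, or $g=1$) to be inspected separately.
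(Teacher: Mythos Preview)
Your proposal is correct and follows essentially the same strategy as the paper: invoke the Scum Theorem, identify the condition on the parts $B_i=F\cap R_i$ needed for the contraction to be uniform (namely $B_i=R_i$ or $|B_i|\le r-k'$), and then build a flat $F$ of rank $k-k'$ meeting that condition while keeping $|F|$ as small as possible. The arithmetic you carry out to locate $j=g-\lceil h/k'\rceil$ matches the paper's computation of the minimal number of full repair sets.

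The one noteworthy difference is organizational. The paper constructs $F$ so that every $B_i$ is either all of $R_i$ or has $|B_i|\le r-k'$; consequently $M/F$ is already uniform and \emph{no deletions are ever needed}. Your framework is more general --- you allow ``heavy but not full'' pieces $r-k'+1\le|B_i|\le r-1$ and repair them by deleting one point each, then optimize the total cost $|F|+|D|=(k-k')+d$ over the number $d$ of heavy sets. This is valid, but the step you single out as the main obstacle (verifying that a single deletion per offending repair set really yields a uniform minor) is in fact avoidable: in your own construction you may always take each ``free'' $B_i$ to be either $R_i$ itself or of size $\le r-k'$, so that $D=\emptyset$. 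One small wording issue: ``allowing $|B_i|$ up to $r$'' should be read as allowing the \emph{rank contribution} up to $r$; to keep $F$ a flat you must take $B_i=R_i$ (size $r+1$) rather than $|B_i|=r$ when you want the full contribution.
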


\begin{proof}
By Theorem \ref{thm:scum}, we are looking for a flat $F \in \FF(M)$ such that $\rho(F)=k-k'$ and for all $A \subseteq E-F$ with $|A|\leq k'-1$ we have $F \cup A \in \FF(M)$.

The second condition implies that if $B_{i} \subsetneq R_{i}$ and $B_{i} \subseteq F$, then $|B_{i}| \leq r-k'$. Otherwise, if $|B_{i}| \geq r+1 - k'$, we have $|R_{i}-B_{i}| \leq k'$. Then, let $e \in R_{i}-B_{i}$ and choose $A=R_{i}-B_{i}-\{e\}$. We have $|A| =r+1-|B_{i}| -1 \leq k'-1$ and $|B_{i} \cup A|=r$. Therefore, $F \cup A$ is not a flat since $e \in \cl(F\cup A) - F \cup A$. To construct $F$, we distinguish two cases depending on the number of repair sets $g=\frac{n}{r+1}$.

Assume first that $i_{1}:=\left\lfloor \frac{k-k'}{r-k'} \right\rfloor < g$. Then, let $F_{1}=\bigcup_{i \in [i_{1}]} B_{i}$ where $B_{i} \subset R_{i}$ with $|B_{i}| = r-k'$. Since $M$ is an MR-matroid, we have that $\rho(F_{1})=|F_{1}|=i_{1} (r-k')$. Let also $X \subset R_{i_{1}+1}$ such that $|X|=k-k'-\rho(F_{1}) = k-k'- \left\lfloor \frac{k-k'}{r-k'} \right\rfloor (r-k') <r-k'$ and define $F=F_{1} \cup X$. 

Hence, $|F|=\rho(F)=k-k'$ and for all $A \subseteq E-F$ with $|A|\leq k'-1$, we have $F \cup A \in \FF(M)$ since $F$ consists of independent elements where no more than $r-k'$ elements of $F$ are contained in the same repair set. 

Assume now that $\left\lfloor \frac{k-k'}{r-k'} \right\rfloor \geq g$. This means that there are not enough repair sets to build an independent set as in the previous case and $F$ has to contain some $R_{i}$. Thus, we are looking for the minimum number $j \in \{1, \ldots, g-1\}$ of repair sets that $F$ has to contain before we can add an independent set. Formally, $j$ is given by
\[
j = \min \left\{ j' \in \N : \left\lfloor \frac{k-k'-j'r}{r-k'} \right\rfloor < g-j' \right\}.
\]

The condition on $j'$ simplifies as follows.
\begin{align*}
\left\lfloor \frac{k-k'-j'r}{r-k'} \right\rfloor < g-j'  & \iff \\
\frac{k-k'-j'r}{r-k'} < g-j' & \iff \\
\frac{k-k'-j'k'}{r-k'} < g & \iff \\
k-k'(j'+1)<(r-k')g & \iff \\
\frac{k-rg}{k'}+g = \frac{-h}{k'} +g < j'+1.
\end{align*}

Therefore, we have that $j=\left\lfloor \frac{-h}{k'}\right\rfloor +g$. Now, let $i_{2}=\left\lfloor \frac{k-k'-jr}{r-k'} \right\rfloor$ and $F_{1}= R \cup B$ where $R=\bigcup_{i \in [j]}R_{i}$ and $B=\bigcup_{i \in \left\lbrace j'+1, \ldots , j'+ i_{2} \right\rbrace } B_{i}$ with $B_{i} \subseteq R_{i}$ such that $|B_{i}|=r-k'$. Then, the rank of $F_{1}$ is $\rho(F_{1})=jr + i_{2} (r-k')$. By definition of $j$, we have $j+i_{2} < g $. Then, let $x=j+i_{2} +1$ and $X \subset R_{x}$ such that $|X|=k-k'-\rho(F_{1})$. Notice that $|X|<r-k'$. Finally, define $F=F_{1} \cup X$. We have indeed that $\rho(F)=k-k'$ and $|F|=j(r+1) + i_{2} (r-k') + |X| = k-k'+j$. Moreover, by the same argument as in the previous case, for all $A \subseteq E - F$ with $|A|\leq k'-1$, we have $F \cup A \in \FF(M)$.  

Hence, $M/F$ is a uniform minor with rank $k'$ and size $n'=n-|F|=n-k+k'-\max\{j,0\}$. 

\end{proof}

We are left with the case $r<k'<k$. In fact, requesting $k'>r$ forces the deletion of one element per repair set and the minor obtained is a subminor of the uniform minor obtained in Proposition \ref{prop:uniform_k_nm}. We state it here for completeness. 

\begin{proposition}
\label{prop:uniform_biggerr_np}
Let $M$ be an $(n,k,r)$-MR matroid and $r<k'<k$. Then $M$ contains a $U_{n'}^{k'}$ minor where 
\begin{equation}
\label{eq:uniform_biggerr_np}
n'=n-g-k+k'.
\end{equation}
\end{proposition}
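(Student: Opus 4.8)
The plan is to reduce to Proposition \ref{prop:uniform_k_nm} together with the behaviour of uniform matroids under deletion. By Proposition \ref{prop:uniform_k_nm}, the $(n,k,r)$-MR matroid $M$ contains a minor isomorphic to $U_{n-g}^{k}$, obtained concretely by deleting one element from each repair set. Since $r < k' < k$, we have $0 < k - k' < k$, so it suffices to show that from $U_{n-g}^{k}$ we can pass to $U_{n'}^{k'}$ with $n' = (n-g) - (k - k')$. This is the content of the elementary fact that deleting $k-k'$ elements from $U_{m}^{k}$ yields $U_{m-(k-k')}^{k'}$ whenever $k - k' < m$, i.e. as long as we delete fewer elements than the size of the ground set; here $m = n-g$ and one checks $k - k' < n - g$ since $g \le n/2 \le k$ in any reasonable regime, but in fact the cleaner observation is $k' \ge 1$ forces $k - k' \le k - 1 < k \le n-g$. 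Thus I would set up the deletion explicitly on $U_{n-g}^{k}$, verify the rank function of the result matches $\min\{|X|,k'\}$, and conclude $M$ has a $U_{n'}^{k'}$ minor with $n' = n - g - k + k'$.

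In more detail, the first step is to recall from Proposition \ref{prop:flats_minor}(2) (or directly from the rank function) that deleting a single element $e$ from $U_{m}^{k}$ gives a matroid on $m-1$ elements whose rank function on a subset $X$ is $\rho_{|([m]-e)}(X) = \min\{|X|, k\}$; this is $U_{m-1}^{k}$ as long as $m - 1 \ge k$, and if $m-1 < k$ it is the free matroid $U_{m-1}^{m-1}$. Iterating the deletion $k - k'$ times starting from $U_{n-g}^{k}$, the rank caps at $k$ until the ground set shrinks to size $k$, after which further deletions merely track the (now trivial) free matroid; since we stop after exactly $k - k'$ deletions and $n - g - (k-k') = n' > k'$ will hold precisely when $n - g > k$, I need to handle the corner case. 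But the corner case $n - g = k$ means $h = gr - k = g(r+1) - (k+g) = n - (k+g) = 0$, i.e. no global parities, and even then the identity $n' = n - g - k + k'$ still reads $n' = k'$, giving $U_{k'}^{k'}$, which is consistent. So the statement holds uniformly.

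The second step is to argue that the composition of minors is a minor: by the last sentence of the Definition of minors, restriction and contraction commute, and a deletion of a deletion is a deletion, so a minor of a minor of $M$ is again a minor of $M$. Hence taking the $U_{n-g}^{k}$ minor of $M$ and then deleting $k - k'$ further elements exhibits a genuine minor of $M$ isomorphic to $U_{n-g-k+k'}^{k'}$.

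The main obstacle here is essentially bookkeeping rather than mathematics: one must be careful that the "extra" deletions used to cut the rank from $k$ down to $k'$ are applied to the elements that survived the first round of deletions, and that $k' \ge 2$ (hypothesised) together with $k' < k$ keeps us strictly inside the uniform regime so that the answer is $U_{n'}^{k'}$ and not an accidentally free or rank-deficient matroid. I would phrase the whole argument by simply writing down the flat structure of $U_{n-g}^{k} \setminus D$ for $D$ a set of $k - k'$ of its elements, invoking Proposition \ref{prop:flats_minor}(2), and matching it against the flats $\{F : |F| < k'\} \cup \{[n']\}$ of $U_{n'}^{k'}$ from the Definition of the uniform matroid; this gives the isomorphism directly and sidesteps any case analysis.
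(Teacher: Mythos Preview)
Your high-level plan---first pass to the $U_{n-g}^{k}$ minor of Proposition~\ref{prop:uniform_k_nm}, then cut the rank down to $k'$---matches the paper. The error is in the second step: you propose to lower the rank by \emph{deleting} $k-k'$ further elements, but deletion from a uniform matroid does not lower the rank as long as the ground set remains at least as large as the rank. You even compute this yourself: $U_{m}^{k}\setminus\{e\}$ has rank function $X\mapsto\min\{|X|,k\}$, i.e.\ it is $U_{m-1}^{k}$ whenever $m-1\ge k$. Iterating, $U_{n-g}^{k}\setminus D$ with $|D|=k-k'$ is $U_{n'}^{k}$ when $n'\ge k$ and $U_{n'}^{n'}$ when $n'<k$; since $r<k'<k$, you hit $U_{n'}^{k'}$ only in the degenerate situation $n'=k'$, i.e.\ $n-g=k$ (no global parities, $h=0$). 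In the generic case $h>0$ your final flat check would simply fail: the flats of $U_{n-g}^{k}\setminus D$ are $\{F:|F|<k\}\cup\{[n']\}$, not $\{F:|F|<k'\}\cup\{[n']\}$.

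The fix is to replace the $k-k'$ deletions by $k-k'$ \emph{contractions}. One has $U_{m}^{k}/\{e\}\cong U_{m-1}^{k-1}$, since $\rho_{/e}(A)=\min\{|A|+1,k\}-1=\min\{|A|,k-1\}$; hence contracting any independent set $F$ of size $k-k'$ in $U_{n-g}^{k}$ yields $U_{(n-g)-(k-k')}^{\,k-(k-k')}=U_{n'}^{k'}$. This is exactly the paper's argument: delete one element per repair set to obtain $U_{n-g}^{k}$, then contract an independent $F$ with $|F|=k-k'$.
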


\begin{proof}
We want $F \in \FF(M)$ and $X \subseteq E-F$ such that $M/F\setminus X \cong U_{n'}^{k'}$. Since $k'>r$, it means that for all $A \subseteq E-(F \cup X)$ with $|A|\leq k'-1$, we have $A \in \FF(M/F \setminus X)$. In particular, sets of size $r$ should also be flats. Therefore, we cannot have $R_{i} \subseteq E-(F \cup X)$ and one element needs to be deleted from $R_{i}$ or be contained in $F$. Since the two options yield the same size $n'$, we can choose to delete them first. Let $X = \bigcup_{i \in [g]} e_{i}$ with $e_{i} \in R_{i}$ and let $M'=M\setminus X$. As in Proposition \ref{prop:uniform_k_nm}, we have $M'\cong U_{n-g}^{k}$. Let $F \subseteq E-X$ with $\rho(F)=|F|=k-k'$. Hence $M\setminus X /F \cong U_{n'}^{k'}$ with $n'=n-g-k+k'$. 
\end{proof}

The techniques developed here easily generalize to the case when $\delta>2$ by taking the size of a repair set to be $r+\delta-1$ and deleting $\delta-1$ elements instead of $1$.


When assuming the MDS conjecture, only the code length matters in the lower bound on the field size. Therefore, assuming the MDS conjecture, the bound on the field size of an MR-LRC is the largest size of a uniform minor minus one except on some special cases when $q$ is even. The next theorem gives the largest size of all the uniform minors found in the previous propositions. As such, it does not depend on the MDS conjecture.

\begin{theorem}
Let $M$ be an $(n,k,r)$-MR matroid with $g=\frac{n}{r+1}$. The largest size of a uniform minor is
\[
n'=\left\lbrace
\begin{array}{ll}
n-\min \left\lbrace g,k-r+\left\lceil \frac{k}{r} \right\rceil -1 \right\rbrace & \text{if } r=2,\\
n-\min\{g,k-r+1\} & \text{if } r\geq 3. \\
\end{array}
\right.
\]
\end{theorem}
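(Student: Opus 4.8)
The plan is to simply compare the four sizes $n'$ produced by Propositions~\ref{prop:uniform_k_nm}--\ref{prop:uniform_biggerr_np} and take the maximum, since by Proposition~\ref{prop:mds_unif} every uniform minor $U_{n'}^{k'}$ corresponds to an MDS code and the bound on the field size is monotone in $n'$ (when the MDS conjecture is assumed). First I would fix the dimension $k'$ of the uniform minor and record the candidate length as a function of $k'$: for $k'=k$ we get $n-g$ from Proposition~\ref{prop:uniform_k_nm}; for $k'=r$ we get $n-k+r-\lceil k/r\rceil+1$ from Proposition~\ref{prop:uniform_r_np}; for $2\leq k'\leq r-1$ we get $n-k+k'-\max\{j,0\}$ with $j=\lfloor -h/k'\rfloor+g$ from Proposition~\ref{prop:uniform_kp_np}; and for $r<k'<k$ we get $n-g-k+k'$ from Proposition~\ref{prop:uniform_biggerr_np}. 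One also needs the degenerate small cases ($k'=0,1$, or $k'=k$ when $k\le r$) but those give MDS codes of length at most trivial and can be absorbed.

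Next I would dispatch the easy comparisons. The family in Proposition~\ref{prop:uniform_biggerr_np}, $n-g-k+k'$ with $k'<k$, is strictly smaller than $n-g$, so it never wins against Proposition~\ref{prop:uniform_k_nm}; this is exactly the remark already made in the text that it is a subminor of Proposition~\ref{prop:uniform_k_nm}'s minor. Within the range $2\le k'\le r-1$, the quantity $n-k+k'-\max\{j,0\}$ is increasing in $k'$ as long as $j\le 0$ (then it equals $n-k+k'$, maximized at $k'=r-1$, giving $n-k+r-1$), and when $j>0$ one checks that increasing $k'$ by one decreases $j$ by roughly $h/k'^2<1$, so again the net effect is non-decreasing; hence within this family the best is at $k'=r-1$. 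Comparing $k'=r-1$ against $k'=r$: the Proposition~\ref{prop:uniform_r_np} value $n-k+r-\lceil k/r\rceil+1$ beats $n-k+r-1$ precisely when $\lceil k/r\rceil\le 2$, i.e. when $k\le 2r$; for $k>2r$ the $k'=r-1$ candidate (or rather the behaviour of the $j$-term) takes over. This is where the split between $r=2$ and $r\ge 3$ originates: when $r=2$ the range $2\le k'\le r-1$ is empty, so the only competitors are $n-g$ and $n-k+r-\lceil k/r\rceil+1 = n-k+1+\lceil k/2\rceil$... I would recompute this carefully, and conclude $n'=n-\min\{g,\ k-r+\lceil k/r\rceil-1\}$; when $r\ge 3$ a short argument shows the $\lceil k/r\rceil$ term is always dominated, leaving $n'=n-\min\{g,\ k-r+1\}$.

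Finally I would assemble the maximum: in all cases the winner is either $n-g$ (Proposition~\ref{prop:uniform_k_nm}) or one of the $k'\in\{r-1,r\}$ candidates, and the minimum in the theorem statement is exactly the choice between these. Writing $n'=n-\min\{\dots\}$ is then just rephrasing $\max$ of lengths as $n$ minus a $\min$ of the deleted/contracted amounts.

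The main obstacle I anticipate is the monotonicity analysis of $n-k+k'-\max\{j,0\}$ in $k'$ over $2\le k'\le r-1$, because $j=\lfloor -h/k'\rfloor+g$ is a floor of a function of $k'$ and one must be careful that the integer rounding never creates a local jump that beats the endpoint $k'=r-1$; handling the boundary between the $j\le 0$ and $j>0$ regimes, and making sure the $r=2$ versus $r\ge 3$ dichotomy is the \emph{only} case split needed, will require the most bookkeeping. Everything else is a routine comparison of linear expressions in $n,k,r$.
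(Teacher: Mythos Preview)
Your overall plan---compare the four candidate lengths from Propositions~\ref{prop:uniform_k_nm}--\ref{prop:uniform_biggerr_np} and take the maximum---is exactly what the paper does, and your disposal of Proposition~\ref{prop:uniform_biggerr_np} and the $r=2$ case is fine.

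The gap is in your monotonicity analysis of $n'_3(k')=n-k+k'-\max\{j,0\}$. You write that increasing $k'$ by one \emph{decreases} $j$ by roughly $h/k'^2$; in fact $j=\lfloor -h/k'\rfloor+g$ is \emph{non-decreasing} in $k'$ (as $k'$ grows, $-h/k'$ moves toward $0$, so its floor can only go up). Worse, the increase in $j$ can exceed $1$, so $n'_3(k')$ is not monotone in general: e.g.\ for $(n,k,r)=(35,16,4)$ one has $h=12$, $g=7$, and $n'_3(2)=20>19=n'_3(3)$. Your anticipated ``bookkeeping'' therefore cannot be just a sign check---the monotonicity you want is simply false in the regime $g\le k-r+1$.

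The paper sidesteps this entirely with a one-line observation: since $k'\le r-1$ and $\lceil k/r\rceil\ge 2$, both $n'_2$ and every $n'_3(k')$ are bounded above by $n-k+r-1$. This immediately gives $n'_1=n-g$ as the winner whenever $g\le k-r+1$. In the complementary regime $g>k-r+1$, one computes directly that $j(r-1)=\lfloor (k-g)/(r-1)\rfloor\le 0$, so $n'_3(r-1)=n-k+r-1$ actually \emph{attains} the upper bound and is the maximum. No monotonicity is needed---just a uniform upper bound and a single achievability check at $k'=r-1$. Replacing your monotonicity step with this bound-and-achieve argument closes the gap and shortens the proof.
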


\begin{proof}
We compare the sizes obtained in \eqref{eq:uniform_k_nm}, \eqref{eq:uniform_r_np}, and \eqref{eq:uniform_kp_np}. The case $r=2$ is straightforward as it is the largest size between \eqref{eq:uniform_k_nm} and \eqref{eq:uniform_r_np}. Assume now that $r \geq 3$. Let $n'_{1}=n-g, n'_{2}=n-k+r-\left\lceil \frac{k}{r} \right\rceil +1,$ and $n'_{3}(k')=n-k+k'-\max\{j,0\}$ where $2 \leq k' \leq r-1$ and $j=\left\lfloor\frac{-h}{k'}\right\rfloor +g$. 

First notice that both $n'_{2}$ and $n'_{3}(k')$ are upper bounded by $n-k+r-1$ since $\left\lceil \frac{k}{r} \right\rceil \geq 2$. Thus, if $g \leq k-r+1$, then $n'_{1} \geq n'_{2}$ and $n'_{1}\geq n'_{3}(k')$.

Suppose now that $g>k-r+1$ and let $k'=r-1$. Then, we have
\[
j =\left\lfloor\frac{k-(r-k')g}{k'}\right\rfloor = \left\lfloor\frac{k-g}{r-1}\right\rfloor < \left\lfloor\frac{r-1}{r-1}\right\rfloor = 1.
\]
Hence, $j\leq 0$ and $n'_{3}(r-1)=n-k+r-1$. We also have that $n'_{3}(k')< n-k+r-1$ for all $2 \leq k'\leq r-1$. Since we already saw that $n'_{2}\leq n-k+r-1$ and by the assumption on $g$, we have that $n'_{1} < n-k+r-1$, this implies that $n'_{3}(r-1)=n-k+r-1$ is the maximum size when $g>k-r+1$. 
\end{proof}

Figure \ref{fig:comparison_graph} displays the comparison between the length \eqref{eq:uniform_k_nm}, \eqref{eq:uniform_r_np}, and \eqref{eq:uniform_kp_np} for fixed $k$ and $r$. As we can see, \eqref{eq:uniform_k_nm} is the largest length in the high-rate regime while \eqref{eq:uniform_kp_np} is the largest length in the low-rate regime. While high-rate codes are preferable for storage, low-rate codes have advantages in terms of availability of hot data and lead to better rates when considering \emph{private information retrieval} schemes.

\begin{figure}
\centering
\includegraphics[height=5.9cm]{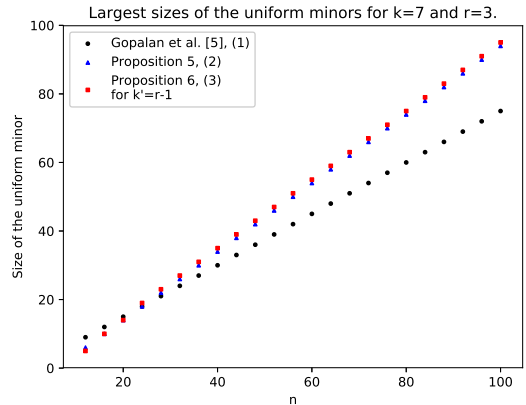}
\caption{Comparison between the sizes \eqref{eq:uniform_k_nm}, \eqref{eq:uniform_r_np}, and \eqref{eq:uniform_kp_np} when $n$ grows, $k=7$, and $r=3$.}
\label{fig:comparison_graph}
\end{figure}

Without assuming the MDS conjecture, we can use Lemma~\ref{lemma:mds_lowerbound} to obtain a lower bound on the field size. When applying the lemma to the uniform minor obtained in Proposition \ref{prop:uniform_kp_np}, the dimension cancels out and the bound is maximized when $k'=2$. 

\begin{theorem}
Any $(n,k,r)$ MR-LRC over $\F_{q}$ satisfies
\[
\left\lbrace
\begin{array}{ll}
q \geq n-k - \left\lceil \frac{k}{r} \right\rceil +2 & \text{if } r=2, \\
q \geq n-k+1 - \max\{j,0\} & \text{if } r \geq 3, \\ 
\end{array}
\right.
\]
where $j=\left\lfloor \frac{-h}{2} \right\rfloor +g$. 
\end{theorem}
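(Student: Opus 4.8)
The plan is to apply Lemma~\ref{lemma:mds_lowerbound} to the uniform minors constructed in the previous propositions and then optimize over the available minors for each value of $r$. The starting observation is that any minor of $M_{\CC}$ corresponds, via deletion and contraction, to a punctured/shortened code of $\CC$, which is still a linear code over the same field $\F_{q}$; and by Proposition~\ref{prop:mds_unif}, a $U_{n'}^{k'}$ minor is exactly an $[n',k']$-MDS code over $\F_{q}$. Hence Lemma~\ref{lemma:mds_lowerbound} gives $n' \leq q + k' - 1$, i.e.\ $q \geq n' - k' + 1$, for every uniform minor $U_{n'}^{k'}$ that $M$ contains. The strategy is therefore to plug in the sizes from \eqref{eq:uniform_k_nm}, \eqref{eq:uniform_r_np}, \eqref{eq:uniform_kp_np}, and \eqref{eq:uniform_biggerr_np} and keep the largest resulting lower bound.

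First I would handle the generic family coming from Proposition~\ref{prop:uniform_kp_np}: for $2 \leq k' \leq r-1$ we have a $U_{n'}^{k'}$ minor with $n' = n - k + k' - \max\{j,0\}$ where $j = \lfloor -h/k'\rfloor + g$. Substituting into $q \geq n' - k' + 1$ the term $k'$ cancels, leaving $q \geq n - k + 1 - \max\{j,0\}$ with $j$ depending on $k'$ only through $\lfloor -h/k'\rfloor$. Since $-h/k'$ is increasing in $k'$ (as $h \geq 0$), the floor $\lfloor -h/k'\rfloor$, hence $j$, hence $\max\{j,0\}$, is minimized by taking $k'$ as small as possible, i.e.\ $k' = 2$. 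This is exactly the claim that "the bound is maximized when $k' = 2$", and it yields $q \geq n - k + 1 - \max\{j,0\}$ with $j = \lfloor -h/2\rfloor + g$, valid whenever $r - 1 \geq 2$, i.e.\ $r \geq 3$. For $r = 2$ the range $2 \leq k' \leq r-1$ is empty, so instead I would use Proposition~\ref{prop:uniform_r_np}: it gives a $U_{n'}^{r}$ minor of size $n' = n - k + r - \lceil k/r\rceil + 1$, so $q \geq n' - r + 1 = n - k - \lceil k/r\rceil + 2$, which is the stated $r = 2$ bound.

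Then I would check that the bound just obtained dominates the ones coming from the other minors, so that no better bound is lost. For Proposition~\ref{prop:uniform_k_nm} we get $q \geq (n-g) - k + 1$, and for Proposition~\ref{prop:uniform_r_np} (in the $r \geq 3$ case) $q \geq n - k - \lceil k/r\rceil + 2$; both of these are to be compared with $n - k + 1 - \max\{j,0\}$. One shows $\max\{j,0\} \leq g$ always (since $\lfloor -h/2\rfloor \leq 0$ when $h \geq 0$, wait — more carefully, $j = \lfloor -h/2\rfloor + g \leq g$), so the Proposition~\ref{prop:uniform_kp_np} bound is at least the Proposition~\ref{prop:uniform_k_nm} bound; and a short estimate comparing $\max\{j,0\}$ with $\lceil k/r\rceil - 1$ handles Proposition~\ref{prop:uniform_r_np}. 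Proposition~\ref{prop:uniform_biggerr_np} only produces subminors of the Proposition~\ref{prop:uniform_k_nm} minor, so it contributes nothing new. The main obstacle is the bookkeeping in the last step: verifying that the $k' = 2$ choice genuinely beats every other minor's contribution for all parameter ranges (both subcases of Proposition~\ref{prop:uniform_kp_np}, and the interplay with $g$ versus $k - r + 1$), rather than just being the best within the single family of Proposition~\ref{prop:uniform_kp_np}. The monotonicity-in-$k'$ argument itself is elementary.
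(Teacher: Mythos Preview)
Your proposal is correct and follows essentially the same approach as the paper: apply Lemma~\ref{lemma:mds_lowerbound} to the uniform minor from Proposition~\ref{prop:uniform_kp_np}, observe that $k'$ cancels so that only $\max\{j,0\}$ matters, and note that $j=\lfloor -h/k'\rfloor+g$ is minimized at $k'=2$; for $r=2$ the range $2\le k'\le r-1$ is empty and one falls back on Proposition~\ref{prop:uniform_r_np}. The paper's entire argument is the single sentence preceding the theorem, and your write-up simply unpacks it. Your final paragraph (checking dominance over the bounds from Propositions~\ref{prop:uniform_k_nm}, \ref{prop:uniform_r_np}, \ref{prop:uniform_biggerr_np}) is not needed: the theorem only asserts \emph{a} lower bound on $q$, not the optimal one among all uniform minors, so exhibiting the single minor with $k'=2$ (resp.\ $k'=r$ when $r=2$) already suffices.
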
 

Even if these bounds are still far from the asymptotic bound in \cite{gopi19}, they improve the non-asymptotic bound in \cite{gopalan14}, which is $q\geq k+1$, for low-rate MR-LRCs. Indeed, a necessary condition for the new bounds to be better is that $n-k\geq k$. More precisely, the new bound for $r=2$ improves on $k+1$ when $\frac{k}{n} \leq \frac{2}{5}$. The bound when $r\geq 3$ improves on $k+1$ when $\frac{k}{n} \leq \frac{9}{20}$ for $r=3$ ; $\frac{k}{n} \leq \frac{12}{25}$ for $r=4$ ; and directly when $\frac{k}{n} \leq \frac{1}{2}$ for all $r\geq 5$.

\section{Conclusion}

In this letter, we studied maximally recoverable codes with a focus on classifying their uniform minors. As a direct consequence, we obtained the largest length of an MDS code inside an MR-LRC. Using the relation between MDS codes and the field size, we derived a lower bound on the required field size of an MR-LRC improving on the non-asymptotic bound in the low-rate regime. However, the gap between the lower bounds and the constructions remains an intriguing open problem. In particular, our results show that new techniques not relying on the MDS conjecture need to be found in order to close it.

\bibliographystyle{IEEEtran}
\bibliography{IEEEabrv,references_MRC}

\ifCLASSOPTIONcaptionsoff
  \newpage
\fi

\end{document}